\documentclass[sigconf]{acmart}
\usepackage{booktabs} 
\usepackage{amsmath}
\usepackage[linesnumbered,ruled]{algorithm2e}
\usepackage{multicol,tabularx,capt-of}
\usepackage{multirow}
\usepackage{flushend}
\setcopyright{rightsretained}

\begin{CCSXML}
<ccs2012>
<concept>
<concept_id>10002951.10003317.10003338.10003342</concept_id>
<concept_desc>Information systems~Similarity measures</concept_desc>
<concept_significance>500</concept_significance>
</concept>
<concept>
<concept_id>10002951.10003227.10003241.10003244</concept_id>
<concept_desc>Information systems~Data analytics</concept_desc>
<concept_significance>300</concept_significance>
</concept>
<concept>
<concept_id>10002951.10003227.10003351.10003444</concept_id>
<concept_desc>Information systems~Clustering</concept_desc>
<concept_significance>300</concept_significance>
</concept>
<concept>
<concept_id>10003120.10003130.10003134.10003293</concept_id>
<concept_desc>Human-centered computing~Social network analysis</concept_desc>
<concept_significance>100</concept_significance>
</concept>
</ccs2012>
\end{CCSXML}

\ccsdesc[500]{Information systems~Similarity measures}
\ccsdesc[300]{Information systems~Data analytics}
\ccsdesc[300]{Information systems~Clustering}
\ccsdesc[100]{Human-centered computing~Social network analysis}

\copyrightyear{2017} 
\acmYear{2017} 
\setcopyright{acmcopyright}
\acmConference[Compute '17]{10th Annual ACM India Compute Conference}{November 16--18, 2017}{Bhopal, India}
\acmPrice{15.00}
\acmDOI{10.1145/3140107.3140114}
\acmISBN{978-1-4503-5323-6/17/11}

\begin{document}
\title{A Novel Weighted Distance Measure for Multi-Attributed Graph}
\author{Muhammad Abulaish, \textit{SMIEEE}}
\affiliation{%
	\department{Department of Computer Science}
	\institution{South Asian University}
	\city{New Delhi-21}
	\country{India}}
\email{abulaish@sau.ac.in}
\author{Jahiruddin}
\authornote{\textit{To whom correspondence should be made}}
\affiliation{%
	\department{Department of Computer Science}
	\institution{Jamia Millia Islamia}
	\city{New Delhi-25}
	\country{India}}
\email{jahir.jmi@gmail.com}

\renewcommand{\shortauthors}{Abulaish et al.}
\begin{abstract}
Due to exponential growth of complex data, graph structure has become increasingly important to model various entities and their interactions, with many interesting applications including, bioinformatics, social network analysis, etc. Depending on the complexity of the data, the underlying graph model can be a simple directed/undirected and/or weighted/un-weighted graph to a complex graph (aka multi-attributed graph) where vertices and edges are labelled with multi-dimensional vectors. In this paper, we present a novel weighted distance measure based on weighted Euclidean norm which is defined as a function of both vertex and edge attributes, and it can be used for various graph analysis tasks including classification and cluster analysis. The proposed distance measure has flexibility to increase/decrease the weightage of edge labels while calculating the distance between vertex-pairs. We have also proposed a \emph{MAGDist} algorithm, which reads multi-attributed graph stored in CSV files containing the list of vertex vectors and edge vectors, and calculates the distance between each vertex-pair using the proposed weighted distance measure. Finally, we have proposed a multi-attributed similarity graph generation algorithm, \emph{MAGSim}, which reads the output of \emph{MAGDist} algorithm and generates a similarity graph that can be analysed using classification and clustering algorithms. The significance and accuracy of the proposed distance measure and algorithms is evaluated on Iris and Twitter data sets, and it is found that the similarity graph generated by our proposed method yields better clustering results than the existing similarity graph generation methods.   
\end{abstract}
%
%
%
\keywords{Data mining, Clustering, Multi-attributed graph, Weighted distance measure, Similarity measure}
\maketitle
\section{Introduction}
Due to increasing popularity of Internet and Web2.0, the User-Generated Contents (UGCs) are growing exponentially. Since most of the UGCs are not independent, rather linked, the graph data structure is being considered as a suitable mathematical tool to model the inherent relationships among data. Simple linked data are generally modelled using simple graph $G=(V, E)$, where  $V$ is the set of vertices representing key concepts or entities and $E \subseteq  V \times V$ is the set of links between the vertices representing the relationships between the concepts or entities. Depending on the nature of data to be modelled, the graph $G$ could be weighted/un-weighted or directed/undirected, and it may have self-loops.  However, there are many complex data like online social networks, where an entity is characterized by a set of features and multiple relationships exist between an entity-pair. To model such data, the concept of multi-attributed graph can be used wherein each vertex is represented by an $n$-dimensional vector and there may be multiple weighted edges between each pair of vertices. In other words, in a multi-attributed graph, both vertices and edges are assigned multiple labels.   
One of the important tasks related to graph data analysis is to decompose a given graph into multiple cohesive sub-graphs, called clusters, based on some common properties. The clustering is an unsupervised learning process to identify the underlying structure (in the form of clusters) of data, which is generally based on some similarity/distance measures between data elements.
Graph clustering is special case of clustering process which divides an input graph into a number of connected components (sub-graphs) such that intra-component edges are maximum and inter-components edges are minimum. Each connected component is called a cluster (aka community) \cite{Girvan02}. Though graph clustering has received attention of many researchers and a number of methods for graph clustering has been proposed by various researchers \cite{Schaeffer2007}, to the best of our knowledge, the field of clustering multi-attributed graph is still unexplored. 
 
Since similarity/distance measure is the key requirement for any clustering algorithm, in this paper, we have proposed a novel weighted distance measure based on weighted Euclidean norm to calculate the distance between the vertices of a multi-attributed graph. The proposed distance measure considers both vertex and edge weight-vectors, and it is flexible enough to assign different weightage to different edges and scale the overall edge-weight while computing the weighted distance between a vertex-pair. We have also proposed a \emph{MAGDist} algorithm that reads the lists of vertex and edge vectors as two separate CSV files and calculates the distance between each vertex-pairs using the proposed weighted distance measure. Finally, we have proposed a multi-attributed similarity graph generation algorithm, \emph{MAGSim}, which reads the output of \emph{MAGDist} algorithm and generates a similarity graph. In other words, \emph{MAGDist} and \emph{MAGSim} algorithms can be used to transform a multi-attributed graph into a simple weighted similarity graph, wherein a single weighted edge exists between each vertex-pair. Thereafter, the weighted similarity graph can be analysed using existing classification and clustering algorithms for varied purposes. The efficacy of the proposed distance measure and algorithms is tested over the well-known Iris data set and a Twitter data set related to three different events. The proposed similarity graph generation approach is compared with other existing similarity graph generation methods like Gaussian kernel and $k$-nearest neighbours methods, and it is found that our proposed approach yields better clustering results in comparison to the existing methods. Moreover, the proposed distance measure can be applied to any graph data where both vertices are edges are multi-dimensional real vectors. In case, the data is not linked, i.e., edges do not exist between the edges, the proposed distance measure simple works as an Euclidean distance between the node vectors.
 
The rest of the paper is organized as follows. Section 2 presents a brief review on various distance measures and graph clustering methods. Section 3 presents the formulation of our proposed weighted distance measure for multi-attributed graph. It also presents the founding mathematical concepts and formal descriptions of the \emph{MAGDist} and \emph{MAGSim} algorithms. Section 4 presents the experimental setup and evaluation results. Finally, section 5 concludes the paper with future directions of work.
\section{Related Work}
Multi-attributed graph is used to model many complex problems, mainly those in which objects or entities are characterized using a set of features and linked together in different ways. For example, an online social network user can be characterized using a set of features like ID, display name, demographic information, interests, etc. Similarly, two person in an online social network may be associated through different relationships like friendship, kinship, common interests, common friends, etc \cite{Sajid2013}. In \cite{kolar14a}, the authors proposed a new principled framework for estimating graphs from multi-attribute data. Their method estimates the partial canonical correlations that naturally accommodate complex vertex features instead of estimating the partial correlations. However, when the vertices of a multi-attributed graph have different dimensions, it is unclear how to combine the estimated graphs to obtain a single Markov graph reflecting the structure of the underlying complex multi-attributed data. In \cite{Katenka2011},  Katenka and Kolaczyk proposed a method for estimating association networks from multi-attribute data using canonical correlation as a dependence measure between two groups of attributes.

Clustering is an unsupervised learning technique which aims to partition a data set into smaller groups in which elements of a group are similar and that elements from different groups are dissimilar. As a result, clustering has broad applications, including the analysis of business data, spatial data, biological data, social network data, and time series data \cite{Zhou2009}, and a large number of researchers have targeted clustering problem \cite{Agrawal1998, Gibson1998, Ng1994}. Since graph is a popular data structure to model structural as well as contextual relationships of data objects, recently graph data clustering is considered as one of the interesting and challenging research problems \cite{NewGir04, Xu2007}, and it aims to decompose a large graph into different densely connected components. Some of the applications of the graph clustering is community detection in online social networks  \cite{Bhat2014, Bhat2015, Bhat2017}, social bot detection \cite{Fazil2017, Fazil2017a}, spammer detection \cite{Abulaish2015, Sajid2014, Faraz2013, SajidBhat2014}, functional relation identification in large protein-protein interaction networks, and so on. Generally graph clustering methods are based on
the concept of normalized cut, structural density, or modularity\cite{NewGir04, Xu2007}. However, like \cite{Tian2008}, graphs can be partitioned on the basis of attribute similarity in such a way that  vertices having similar attribute vectors are grouped together to form a cluster.  
Mainly graph clustering and simple data clustering differs in the way associations between objects are calculated. In graph clustering, the degree of association between a pair of objects is calculated as closeness between the respective nodes of the graph, generally in terms of number of direct links or paths between them. Whereas, in simple data clustering, the degree of association between a pair of objects is calculated in terms of similarity/distance measure between the vector representations of the objects. 
Both topological structure and vertex properties of a graph play an important role in many real applications. For example, in a social graph, vertex properties can be used to characterize network users, whereas edges can be used to model different types of relationships between the users. 

It can be seen from the discussions mentioned above that most of the graph analysis approaches have considered only one aspect of the graph structure and ignored the other aspects, due to which the generated clusters either have loose intra-cluster structures or reflect a random vertex properties distribution within clusters. However, as stated in \cite{Zhou2009}, ``an ideal graph clustering should generate clusters which have a cohesive intra-cluster structure with homogeneous vertex properties, by balancing the structural and attribute similarities". Therefore, considering both vertex attributes and links for calculating similarity/distance between the pairs of vertices in a graph is one of basic requirements for clustering complex multi-attributed graphs. In this paper, we have proposed a weighted distance function that can transform a multi-attributed graph into a simple similarity graph on which existing graph clustering algorithms can be applied to identify densely connected components. 
\section{Preliminaries and Distance Measures}
In this section, we present the mathematical basis and formulation of the proposed weighted distance measures for multi-attributed graph. Starting with the basic mathematical concepts in subsection \ref{SC1}, the formulation of the proposed distance function along with an example is presented in the subsequent subsections.
\subsection{Founding Mathematical Concepts}\label{SC1}
Since the Linear Algebra concepts \textit{inner product} and \textit{norm} generally form the basis for any distance function, we present a brief overview of these mathematical concepts in this section.
\subsubsection*{Inner Product:}
An \textit{inner product} is a generalized form of the \textit{vector dot product} to multiply vectors together in such a way that the end result is a scalar quantity. If $\vec{a}=(a_1, a_2, ...,a_n)^T$ and $\vec{b}=(b_1, b_2, ...,b_n)^T$ are two vectors in the vector space $\Re^n$, then the \textit{inner product} of $\vec{a}$ and $\vec{b}$ is denoted by $\langle \vec{a}\,, \vec{b}\rangle$ and defined using equation \ref{eqn1} \cite{Olver08}.
\begin{equation}\label{eqn1}
  \langle \vec{a}\,, \vec{b}\rangle = \vec{a} \cdot \vec{b} = a_1b_1 + a_2b_2 + \dots + a_nb_n = \sum_{i=1}^n a_ib_i
\end{equation}
\begin{equation}\label{eqn2}
\begin{split}
  \langle \vec{a}\,, \vec{b}\rangle = a^Tb =
   \begin{bmatrix}
  a_1 & a_2 & \dots & a_n
  \end{bmatrix}
  \begin{bmatrix}
    b_1 \\
    b_2 \\
    \vdots \\
    b_n
  \end{bmatrix} \\
  = \begin{bmatrix}
    a_1b_1 + a_2b_2 + \dots + a_nb_n
  \end{bmatrix}
\end{split}
\end{equation}
Alternatively, \textit{inner product} of $\vec{a}$ and $\vec{b}$ can be defined as a matrix product of row vector $\vec{a}^T$ and column vector $\vec{b}$ using equation \ref{eqn2}. However, as stated in \cite{Olver08}, an inner product must satisfy the following four basic properties, where  $\vec{a}$, $\vec{b}$, and $\vec{c}$ belong to $\Re^n$, and $n$ is a \textit{scalar} quantity.
\begin{enumerate}
  \item  $\langle \vec{a}+\vec{b}\,, \vec{c}\rangle = \langle \vec{a}\,, \vec{c}\rangle + \langle \vec{b}\,, \vec{c}\rangle$
  \item $\langle n\vec{a}\,, \vec{b}\rangle = n\langle \vec{a}\,, \vec{b}\rangle$
  \item $\langle \vec{a}\,, \vec{b}\rangle = \langle \vec{b}\,, \vec{a}\rangle$
  \item $\langle \vec{a}\,, \vec{a}\rangle > 0$  and $\langle \vec{a} \,, \vec{a} \rangle = 0$ iff $\vec{a}=0$
\end{enumerate}
\subsubsection*{Weighted Inner Product:}
The \textit{weighted inner product} is generally used to emphasize certain features (dimensions) through assigning weight to each dimension of the vector space $\Re^n$. If $d_1, d_2,  \dots, d_n \in \Re$ are $n$ positive real numbers between $0$ and $1$ and $D$ is a corresponding diagonal matrix of order $n \times n$, then the \textit{weighted inner product} of vectors $\vec{a}$ and $\vec{b}$ is defined by equation \ref{eqn3} \cite{Olver08}. It can be easily verified that \textit{weighted inner product} also satisfies the four basic properties of the \textit{inner product} mentioned earlier in this section. 
\begin{equation}\label{eqn3}
  \langle \vec{a}\,, \vec{b}\rangle_{D} = a^TDb = \sum_{i=1}^n d_ia_ib_i\,,  \text{where D=}
  \begin{bmatrix}
    d_1 & 0 & \dots & 0 \\
    0 & d_2 & \dots & 0 \\
    \vdots & \vdots & \ddots & \vdots \\
    0 & 0 & \dots & d_n
  \end{bmatrix}
\end{equation}
\subsubsection*{Norm:}
In linear algebra and related area of mathematics, the norm on a vector space $\Re^n$ is a function used to assign a non negative real number to each vector $\vec{a} \in \Re^n$. Every inner product gives a norm that can be used to calculate the length of a vector. However, not every norm is derived from an inner product \cite{Olver08}. The norm of a vector $\vec{a}\in \Re^n$ is denoted by $\lVert \vec{a} \rVert$ and can be defined using equation \ref{eqn4}.
\begin{equation}\label{eqn4}
  \lVert \vec{a} \rVert = \sqrt{\langle \vec{a}\,, \vec{a}\rangle}
\end{equation}
As given in \cite{Olver08}, if $\vec{a}=(a_1, a_2, ...,a_n)^T$ and $\vec{b}=(b_1, b_2, ...,b_n)^T$ are two vectors of the vector space $\Re^n$ and $n>0$ is a scalar quantity, then every norm satisfies the following three properties.
\begin{enumerate}
  \item  $\lVert \vec{a} \rVert > 0$ with $\lVert \vec{a} \rVert = 0$ iff $\vec{a}=0$ \hspace{40pt} \textit{(positivity)}
  \item $\lVert n\vec{a} \rVert =  n\lVert \vec{a} \rVert$ \hspace{91pt} \textit{(homogeneity)}
  \item $\lVert \vec{a} + \vec{b} \rVert \leq  \lVert \vec{a} \rVert + \lVert \vec{b} \rVert$ \hspace{33pt} \textit{(triangle inequality)}
\end{enumerate}
Some of the well-known norms defined in \cite{Weisstein02} are as follows:
\begin{enumerate}
  \item  \textit{L$_1$-norm (\textit{aka} Manhattan norm):} The \textit{$L_1$-norm} of a vector $\vec{a}\in \Re^n$ is simply obtained by adding the absolute values of its components. Formally, the \textit{L$_1$-norm} of the vector $\vec{a}$ is represented as $\lVert \vec{a} \rVert_1$ and it can be defined using equation \ref{eqn5}.
      \begin{equation}\label{eqn5}
         \lVert \vec{a} \rVert_1 = \sum_{i=1}^n |a_i|
      \end{equation}
  \item \textit{L$_2$-norm (\textit{aka} Euclidean norm):} The \textit{$L_2$-norm} of a vector $\vec{a}\in \Re^n$ is obtained by taking the positive square root of the sum of the square of its components. Formally, the \textit{L$_2$-norm} of a vector $\vec{a}$ is represented as $\lVert \vec{a} \rVert_2$ and it can be defined using equation \ref{eqn6}.
      \begin{equation}\label{eqn6}
         \lVert \vec{a} \rVert_2 = \left(\sum_{i=1}^n a_i^2\right)^{1/2}
      \end{equation}
   \item \textit{Infinity-norm (aka max-norm):} The \textit{Infinity-norm} of a vector $\vec{a}\in \Re^n$ is obtained as maximum of the absolute values of the components of the vector. Formally, the \textit{infinity-norm} of the vector $\vec{a}$ is represented as $\lVert \vec{a} \rVert_{\infty}$ and it can be defined using equation \ref{eqn7}.
      \begin{equation}\label{eqn7}
         \lVert \vec{a} \rVert_{\infty} = \max_{i=1}^n |a_i|
      \end{equation}
    \item \textit{L$_p$-norm:} The \textit{$L_p$ norm} of a vector $\vec{a}\in \Re^n$ is the positive $p^{th}$ root of the sum of $p^{th}$ power of the absolute value of the components of the vector. Formally, the \textit{$L_p$-norm} of the vector $\vec{a}$ is represented as $\lVert \vec{a} \rVert_p$ and it can be defined using equation \ref{eqn8}.
      \begin{equation}\label{eqn8}
         \lVert \vec{a} \rVert_p = \left(\sum_{i=1}^n |a_i|^p\right)^{1/p}
      \end{equation}
    \item \textit{Weighted Euclidean norm:} The \textit{weighted Euclidean norm} of a vector $\vec{a}\in \Re^n$ is a special case of the \text{Euclidean norm} in which different dimensions of $\vec{a}$ can have different weights. If $A^T=\vec{a}^T$ is a row matrix of order $1 \times n$, and $D$ is a diagonal matrix of order $n \times n$ whose $i^{th}$ diagonal element is the weight of the $i^{th}$ dimension of the vector $\vec{a}$, then the weighted Euclidean norm of  $\vec{a}$ is the positive square root of the product matrix $A^TDA$. In case $D$ is an identity matrix, this gives simply the \textit{Euclidean norm}. Formally, for a given vector $\vec{a}\in \Re^n$ and a weight (diagonal) matrix $D$ of order $n \times n$,  the \textit{weighted Euclidean norm} of $\vec{a}$ is represented as $\lVert \vec{a} \rVert_D$ and defined using equation \ref{eqn9}.
    
      \begin{equation}\label{eqn9}
         \lVert \vec{a} \rVert_D = \left(A^TDA\right)^{1/2} = \left(\sum_{i=1}^n d_ia_i^2\right)^{1/2}
      \end{equation}
   \end{enumerate}
\subsection{Multi-Attributed Graph}
In this section, we present a formal definition of multi-attributed graph, in which both vertices and edges can be represented as multi-dimensional vectors. Mathematically, a multi-attributed graph can be defined as a quadruple $G_m=(V, E, \mathcal{L}_v, \mathcal{L}_e)$, where $V \neq \phi$ represents the set of vertices, $E \subseteq V \times V$ represents the set of edges, $\mathcal{L}_v : V \to \Re^n$ is a vertex-label function that maps each vertex to an $n$-dimensional real vector, and $\mathcal{L}_e : E \to \Re^m$ is an edge-label function that maps each edge to an $m$-dimensional real vector. Accordingly, a vertex $v \in V$ in a multi-attributed graph can be represented as an $n$-dimensional vector $\vec{v} = (v_1, v_2, \dots, v_n)^T$ and an edge between a vertex-pair $(u, v)$ can be represented as an $m$-dimensional vector $\vec{e}(u, v) = (e_1(u, v), e_2(u, v), \dots, $ $e_m(u, v))^T$.
\subsection{Proposed Weighted Distance Measure}
In this section, we present our proposed weighted distance measure for multi-attributed graph that is based on \textit{weighted Euclidean norm}. If $u=\vec{u} =(u_1, u_2, \dots, u_n)^T$ and $v=\vec{v} =(v_1, v_2, \dots, v_n)^T$ are two vertices of a multi-attributed graph and $(u, v) = \vec{e}(u, v) = (e_1(u, v), e_2(u, v), \dots, $ $e_m(u, v))^T$ is a multi-labelled edge between $u$ and $v$, then the distance between $u$ and $v$ is denoted by $\Delta(u, v)$ and calculated using equation \ref{eqn10}, where $\lambda$ is a scalar value (equation \ref{eqn11}), and $I$ is an identity matrix of order $n \times n$. The value of $\lambda$  depends on the aggregate weight, $\omega(u, v)$, of the edges between the vertex-pair $(u, v)$, which is calculated using equation \ref{eqn12}. The value of $\gamma >0$ is a user-supplied threshold, providing flexibility to tune the value of $\lambda$ for calculating distance between a vertex-pair. In equation \ref{eqn12}, $\alpha_i$ is a constant such that $\alpha_i \geq 0$ and $\sum_{i=1}^n \alpha_i = 1$.

\begin{equation}\label{eqn10}
 \begin{split}
         \Delta(u, v)=\Delta(\vec{u}, \vec{v}) = \lVert \vec{u} - \vec{v} \rVert_{\lambda I} = \left((u - v)^T. \lambda I. (u - v)\right)^{1/2} \\
          = \left(
         \begin{bmatrix}
            u_1-v_1 & \dots & u_n-v_n
          \end{bmatrix}
          \begin{bmatrix}
            \lambda & \dots & 0 \\
            \vdots & \ddots & \vdots \\
            0 & \dots & \lambda
         \end{bmatrix}
         \begin{bmatrix}
            u_1-v_1 \\
            \vdots  \\
            u_n-v_n
         \end{bmatrix}
         \right)^{1/2}\\
          = \left(\lambda(u_1 - v_1)^2 + \lambda(u_2 - v_2)^2 + \dots + \lambda(u_n - v_n)^2\right)^{1/2} \\
          = \sqrt{\lambda} \times \left(\sum_{i=1}^n (u_i - v_i)^2\right)^{1/2}
         \end{split}
      \end{equation}
  \begin{equation}\label{eqn11}
     \lambda = \frac{1}{(1+\omega(u, v))^\gamma}
  \end{equation}
  \begin{equation}\label{eqn12}
  \begin{split}
     \omega(u, v) = \alpha_1 e_1(u, v) + \alpha_2 e_2(u, v) + \dots + \alpha_m e_m(u, v)\\ = \sum_{i=1}^n \alpha_i e_i(u, v)
  \end{split}
  \end{equation}

It may be noted that the $\lambda$ function defined using equation \ref{eqn11} is a monotonic decreasing function, as proved in theorem \ref{th1}, so that the distance between a vertex-pair could decrease with increasing ties between them, and vice-versa. The novelty of the proposed distance function lies in providing flexibility (i) to assign different weights to individual edges using $\alpha$, and (ii) to control the degree of monotonicity using $\gamma$, as shown in figure \ref{Lambda_Gamma}. It may also be noted that the value of $\lambda$ will always be 1, if either if $\omega(u, v)=0$ or $\gamma=0$.  


\begin{theorem}\label{th1}
The $\lambda$ function defined in equation \ref{eqn11} is a monotonically decreasing function. 
\end{theorem}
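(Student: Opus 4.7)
The plan is to view $\lambda$ as a single-variable function of the aggregate edge weight $\omega \equiv \omega(u,v)$, with the user-supplied parameter $\gamma > 0$ held fixed, and to establish strict monotonic decrease in $\omega$. Before touching calculus, I would first pin down the domain: from equation \ref{eqn12}, since each $\alpha_i \geq 0$ and the raw edge-attribute values $e_i(u,v)$ are non-negative, we have $\omega \geq 0$, so $1+\omega \geq 1 > 0$. This guarantees that $(1+\omega)^{\gamma}$ is well-defined and strictly positive for the given $\gamma > 0$, and therefore $\lambda$ is a legitimate real-valued function on $[0,\infty)$.

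From here I see two equally short routes, and I would pick whichever reads most cleanly. The calculus route is to differentiate: $\dfrac{d\lambda}{d\omega} = -\gamma\,(1+\omega)^{-\gamma-1}$. Because $\gamma > 0$ and $(1+\omega)^{-\gamma-1} > 0$ on the domain, this derivative is strictly negative, which is exactly the statement of strict monotonic decrease. The purely algebraic route, which avoids any appeal to differentiability of $x^{\gamma}$, is to pick $0 \leq \omega_1 < \omega_2$, observe $1+\omega_1 < 1+\omega_2$ with both sides positive, invoke strict monotonicity of the map $x \mapsto x^{\gamma}$ on $(0,\infty)$ for $\gamma > 0$ to get $(1+\omega_1)^{\gamma} < (1+\omega_2)^{\gamma}$, and then take reciprocals (which reverse the inequality for positive quantities) to conclude $\lambda(\omega_1) > \lambda(\omega_2)$.

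There is no serious obstacle here; the argument is a one-line calculus or algebra check. The only points worth being careful about in the write-up are (i) invoking the standing hypothesis $\gamma > 0$ explicitly, since the claim fails for $\gamma = 0$ (where $\lambda \equiv 1$) and reverses for $\gamma < 0$, and (ii) flagging the non-negativity of $\omega$ so that the base $1+\omega$ is bounded away from zero and the exponentiation is unambiguous. With those two remarks in place the conclusion follows immediately, and I would close by noting the boundary behaviour already mentioned in the surrounding text: $\lambda = 1$ precisely when $\omega = 0$ or $\gamma = 0$, and $\lambda \to 0$ as $\omega \to \infty$, which is consistent with the intended interpretation that stronger ties shrink the effective distance.
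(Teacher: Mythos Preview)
Your algebraic route is exactly the paper's argument for monotonicity in $\omega$: take $\omega_1 \le \omega_2$, preserve the inequality under $x \mapsto (1+x)^\gamma$, then reverse it under reciprocation. The paper, however, also establishes a second claim that you do not address: that $\lambda$ is monotonically decreasing in $\gamma$ for fixed $\omega > 0$, via the same style of chain (since $1+\omega > 1$ makes $\gamma \mapsto (1+\omega)^\gamma$ increasing). The theorem statement is ambiguous about which variable is meant, and the accompanying Figure~\ref{Lambda_Gamma} varies both, so you may wish to include the $\gamma$-monotonicity as well. On a minor point, your hypothesis $\gamma > 0$ is actually sharper than what the paper's own proof invokes ($\gamma \ge 1$); the map $x \mapsto x^\gamma$ is strictly increasing on $(0,\infty)$ for every $\gamma > 0$, so your version is correct and matches the standing assumption stated in the text.
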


\begin{proof}
Let $\lambda = f(\omega)$ be a function of $\omega$, where $\omega$ is the aggregate weight of edges between a vertex-pair $(u, v)$. \\
Let $\omega_1, \omega_2 \geq 0$ such that $\omega_1 \leq \omega_2 $ \\
$\Rightarrow (1 + \omega_1) \leq (1 + \omega_2)$ \\
$\Rightarrow (1 + \omega_1)^{\gamma} \leq (1 + \omega_2)^{\gamma}$, where $\gamma \geq 1$ \\
$\Rightarrow \frac{1}{(1 + \omega_1)^{\gamma}} \geq \frac{1}{(1 + \omega_2)^{\gamma}}$\\ 
$\Rightarrow f(\omega_1) \geq f(\omega_2)$\\ 
Hence, $\lambda$ is a monotonically decreasing function of $\omega$ for $\gamma \geq 1$. \\
Similarly, it can be shown that $\lambda$ is a monotonically decreasing function of $\gamma$ for $\omega > 0$.\\
Let $\lambda = f(\gamma)$ be a function of $\gamma$\\
Let $\gamma_1, \gamma_2 \geq 1$ such that $\gamma_1 \leq \gamma_2$ \\
$\Rightarrow (1 + \omega)^{\gamma_1} \leq (1 + \omega)^{\gamma_2}$, where $ \omega > 0$ \\
$\Rightarrow \frac{1}{(1 + \omega)^{\gamma_1}} \geq \frac{1}{(1 + \omega)^{\gamma_2}}$\\ 
$\Rightarrow f(\gamma_1) \geq f(\gamma_2)$\\
Hence, $\lambda$ is a monotonically decreasing function of $\gamma$ for $\omega>0$.
\end{proof}

\begin{figure}[h]
\includegraphics[width=8cm]{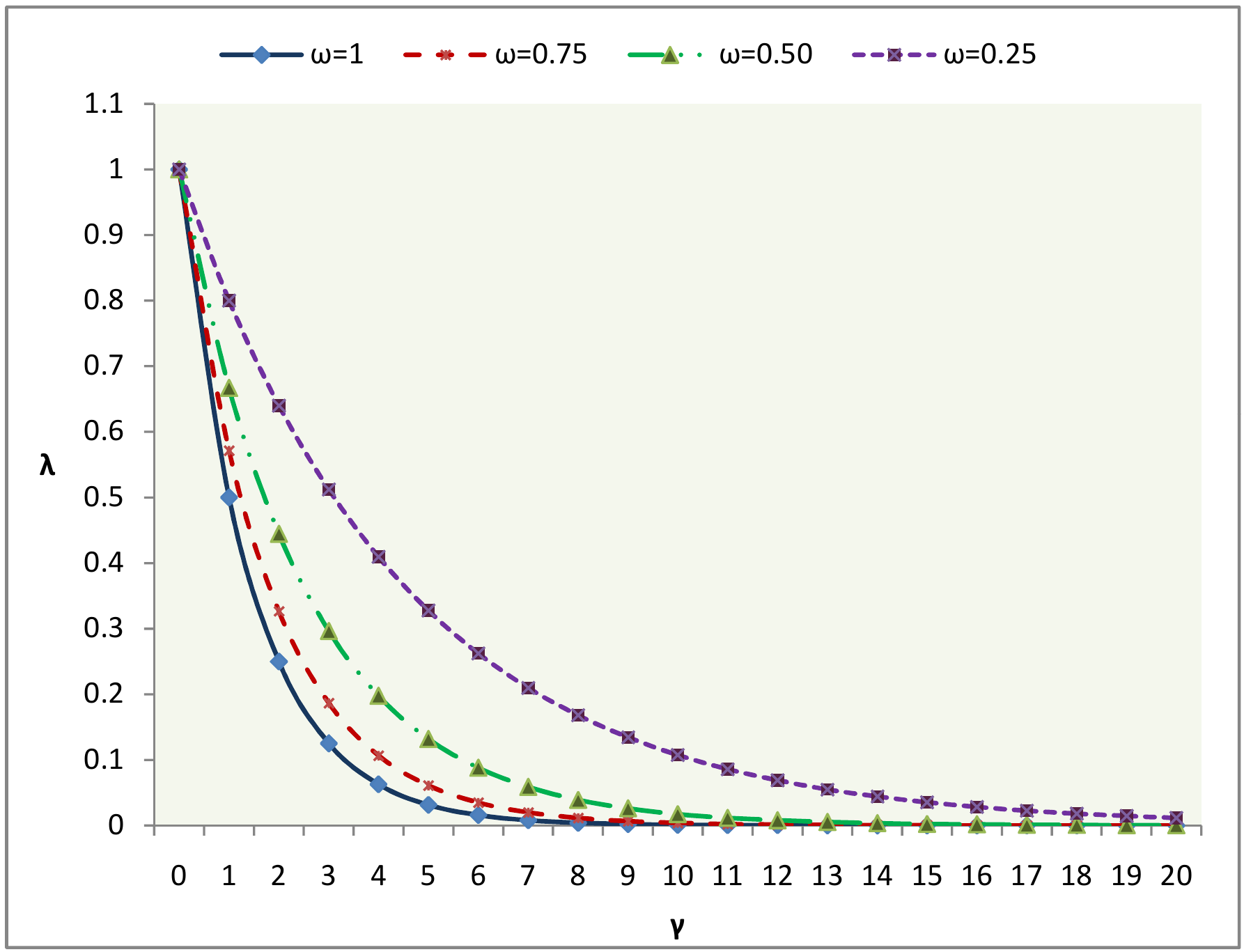}
\caption{Visualization of the  monotonically decreasing nature of $\lambda$ for varying $\gamma$ and $\omega$ values}
\label{Lambda_Gamma}
\end{figure}

It should be noted that if there is no direct link (edge) between a vertex pair $(u, v)$, then the value of $\omega(u, v)$ in equation \ref{eqn11} becomes zero, leading the value of $\lambda$ to 1. In this case, the value of $\Delta(u, v)$ is simply an \textit{Euclidean distance} between the vertex-pair $(u, v)$, as proved in theorem \ref{th2}.

\begin{theorem}\label{th2}
In a multi-attributed graph, if there is no edge between a vertex-pair then the distance between them is simply an Euclidean distance.
\end{theorem}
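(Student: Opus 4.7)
The plan is a direct chain of substitutions through the definitions already laid out in equations \ref{eqn10}--\ref{eqn12}, so the argument should be short and essentially computational. First I would make the hypothesis precise: ``no edge between $u$ and $v$'' means that the edge-label vector $\vec{e}(u,v)$ is the zero vector in $\Re^m$, i.e.\ $e_i(u,v)=0$ for every $i\in\{1,\dots,m\}$. This is the natural reading because a missing edge contributes no weight on any of the $m$ attribute channels.

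Next I would feed this into equation \ref{eqn12}. Since $\omega(u,v)=\sum_{i=1}^m \alpha_i e_i(u,v)$ is a linear combination of the $e_i(u,v)$, plugging in zeros yields $\omega(u,v)=0$ regardless of the choice of the convex weights $\alpha_i$. Then equation \ref{eqn11} gives $\lambda = 1/(1+0)^{\gamma} = 1$, which holds for every admissible $\gamma>0$ (so the conclusion is independent of the user-supplied threshold).

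Finally I would substitute $\lambda=1$ into the closed form already derived at the bottom of equation \ref{eqn10},
\begin{equation*}
\Delta(u,v) = \sqrt{\lambda}\,\left(\sum_{i=1}^n (u_i-v_i)^2\right)^{1/2},
\end{equation*}
to obtain $\Delta(u,v)=\bigl(\sum_{i=1}^n (u_i-v_i)^2\bigr)^{1/2}$, which is precisely the Euclidean distance between the vertex vectors $\vec{u}$ and $\vec{v}$ in $\Re^n$. A single concluding sentence would tie this back to the statement of the theorem.

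There is no real obstacle here; the ``hard part'' is only a matter of bookkeeping, namely being explicit that the absence of an edge forces every $e_i(u,v)$ to vanish (rather than only the aggregate $\omega(u,v)$), and noting that the conclusion is uniform in the parameters $\alpha_1,\dots,\alpha_m$ and $\gamma$. Once that modelling assumption is stated, the proof reduces to the two substitutions above.
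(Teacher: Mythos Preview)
Your proposal is correct and mirrors the paper's own proof almost step for step: the paper also interprets ``no edge'' as $\vec{e}(u,v)=\vec{0}$, concludes $\omega(u,v)=0$ via equation~\ref{eqn12}, then $\lambda=1$ via equation~\ref{eqn11}, and finally substitutes into equation~\ref{eqn10} to recover the Euclidean distance. Your added remarks on uniformity in $\alpha_i$ and $\gamma$ are a slight elaboration but do not change the route.
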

 
\begin{proof}
Let $u=\vec{u} =(u_1, u_2, \dots, u_n)^T$ and $v=\vec{v} =(v_1, v_2, \dots, v_n)^T$ be two vertices not connected by any edge. 
Since there is no edge between the vertex-pair $(u, v)$, the edge vector $\vec{e}(u, v) = (e_1(u, v), e_2(u, v), \dots, $ $e_m(u, v))^T = \vec{0}$.\\ 
$\Rightarrow e_1(u, v) = e_2(u, v) =  \dots = $ $e_m(u, v) = 0$\\
Hence, $\omega(u, v) = \alpha_1 . e_1(u, v) + \alpha_2 . e_2(u, v) + \dots + \alpha_m . e_m(u, v)\\ 
= \alpha_1 . 0 + \alpha_2 . 0 + \dots + \alpha_m . 0 = 0.$  \hspace{10 pt} [using equation 12]\\
Hence, $\lambda = \frac{1}{(1+\omega(u, v))^\gamma} = \frac{1}{(1+0)^\gamma} = 1$ \hspace{10 pt} [using equation 11] \\ 
Finally, $\Delta(u, v)= \sqrt{\lambda} \times \left(\sum_{i=1}^n (u_i - v_i)^2\right)^{1/2}$ \hspace{10 pt}  [using equation 10]\\
$= \sqrt{1}\times \left(\sum_{i=1}^n (u_i - v_i)^2\right)^{1/2} = \left(\sum_{i=1}^n (u_i - v_i)^2\right)^{1/2}$, which is an \textit{Euclidean distance} between the vertex-pair ($u$, $v$).
\end{proof}

Algorithm \ref{algorithm1} presents a formal way to calculate the distance between all pairs of vertices of a given multi-attributed graph using \emph{MAGDist}. The proposed \emph{MAGDist} algorithm reads a multi-attributed graph using two separate CSV files -- one containing the list of vertex vectors and the other containing the list of edge vectors, and produces the distance between each vertex-pairs as a CSV file, wherein each tuple contains a vertex-pair and distance value.  We have also proposed \emph{MAGSim} algorithm to generate similarity graph using the distance values calculated by the \emph{MAGDist} algorithm. The proposed algorithm reads each vertex-pair and its distance value $<i, j, \Delta(i, j)>$ and calculates similarity between the vertex-pair $(i, j)$ using equation \ref{eqn14}.  
\begin{equation}\label{eqn14}
     sim(i, j) = 1 -  \frac{\Delta(i, j)}{\max\limits_{x,y \in V}\{\Delta(x, y)\}}
\end{equation}
\begin{algorithm}[htbp]
    \tcp{computing distance between all pairs of vertices of a multi-attributed graph}
    \SetKwInOut{Input}{Input}
    \SetKwInOut{Output}{Output}
    \Input{CSV files $\mathcal{L}_v$ and $\mathcal{L}_e$ containing list of vertex vectors, and edge vectors, respectively. An $1D$ array $\alpha[1..m]$, wherein $\alpha_i \geq 0$ and $\sum_{i=1}^m \alpha_i = 1$ for calculating the linear combination of edge weights between a pair of vertices. A positive integer threshold $\gamma$ representing the weightage of edges in distance calculation.}
    \Output{A CSV file $D$ containing distance between each pairs of vertices.}
    $n_v \leftarrow vertexCount[\mathcal{L}_v]$ \tcp*{number of vertices}
    $n \leftarrow vertexDimCount[\mathcal{L}_v]$ \tcp*{vertex vector dimension}
    $m \leftarrow edgeDimCount[\mathcal{L}_e]$ \tcp*{edge vector dimension}
    $V[n_v][n] \leftarrow \mathcal{L}_v$ \tcp*{reading $\mathcal{L}_v$ into array $V$.}
    \For{each vertex-pair $(i, j) \in $ $\mathcal{L}_e$}{
       \tcp{calculating aggregate weight $\omega(i, j)$ of the edges between vertex-pair  $(i, j)$.}
       $\omega(i, j) \leftarrow 0$\;
       \For{$k\leftarrow 1$ \KwTo $m$}{
         $\omega(i, j) = \omega(i, j) + \alpha[k] \times e_k(i, j)$ \tcp*{[eqn. \ref{eqn12}]}
       }
\tcp{calculating the value of scalar quantity $\lambda$}      
       $\lambda = \frac{1}{(1 + \omega(i, j))^\gamma}$ \tcp*{[eqn. \ref{eqn11}]}
       \tcp{calculating distance $\Delta(i, j)$ between the vertex-pair $(i, j)$.} 
       $d \leftarrow 0$\;
       \For{$k\leftarrow 1$ \KwTo $n$}{
            $d = d + (V[i][k] - V[j][k])^2$\;
       }
       $\Delta(i, j) = \sqrt{\lambda} \times \sqrt{d}$ \tcp*{[eqn. \ref{eqn10}]}
       write tuple $<i, j, \Delta(i, j)>$ into $D$ \;
    }
    \caption{\textbf{MAGDist$(\mathcal{L}_v, \mathcal{L}_e, \alpha, \gamma)$}}
    \label{algorithm1}
\end{algorithm}

\begin{algorithm}[htbp]
    \tcp{generating similarity graph corresponding to a multi-attributed graph}
    \SetKwInOut{Input}{Input}
    \SetKwInOut{Output}{Output}
    \Input{A CSV file $D$ containing vertex-pairs along with distance output by the \emph{MAGDist}.}
    \Output{A CSV file $G_s$ containing edge-listed similarity graph.}
    $d_{max} \leftarrow$ getMaxDistance($D$)\; 
    \For{each tuple $<i, j, \Delta(i, j>$ in $D$}{
       $sim(i, j) = 1 -  \frac{\Delta(i, j)}{d_{max}}$
       write tuple $<i, j, sim(i, j)>$ into $G_s$ \;
    }
    \caption{\textbf{MAGSim$(D)$}}
    \label{algorithm2}
\end{algorithm}
\textbf{Example:} Figure \ref{fig1} presents a simpler case of multi-attributed graph having four vertices in which each vertex is represented as a two dimensional feature vector and each edge is represented as an one dimensional vector. In case, there is no direct edge between a pair of vertices (e.g., $v_1$ and $v_3$), the corresponding edge vector is a zero vector. If we simply calculate the Euclidean distance between the vertex-pairs, then the distance between the vertex-pairs $(v_1, v_2), (v_2, v_3), (v_3, v_4)$ and $(v_4, v_1)$ is same (10 unit), whereas the distance calculated by the proposed \emph{MAGDist} differs, based on the weight of the edges between them. Similarly, the Euclidean distance between the vertex-pairs $(v_1, v_3)$ and $(v_2, v_4)$ are same ($10\sqrt{2}$), but the distance values calculated using \emph{MAGDist} are different. The distance values between each vertex-pairs of the multi-attributed graph calculated using \emph{MAGDist} for $\gamma=1$ and $\gamma=2$ are is shown in $D_1$ and $D_2$ matrices, respectively of figure \ref{fig1}. It can be observed that giving more weightage to edges by increasing the value of $\gamma$ reduces the distance between the respective vertex-pairs.
\begin{figure*} [htbp]
\includegraphics[width=0.8\textwidth, clip=true]{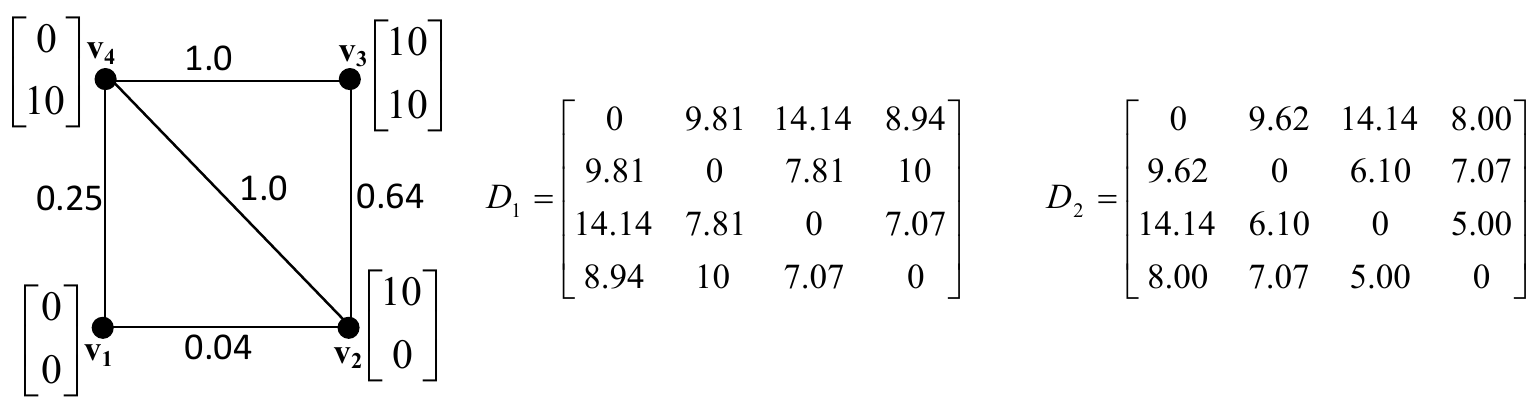}
\caption{A multi-attributed graph with vertices as multi-dimensional vectors, and distance matrices $D_1$, $D_2$ calculated using \emph{MAGDist} algorithm for $\gamma=1$ and $\gamma=2$, respectively}
\label{fig1}
\end{figure*}
\begin{figure*}[htbp]
\includegraphics[width=0.8\textwidth, clip=true]{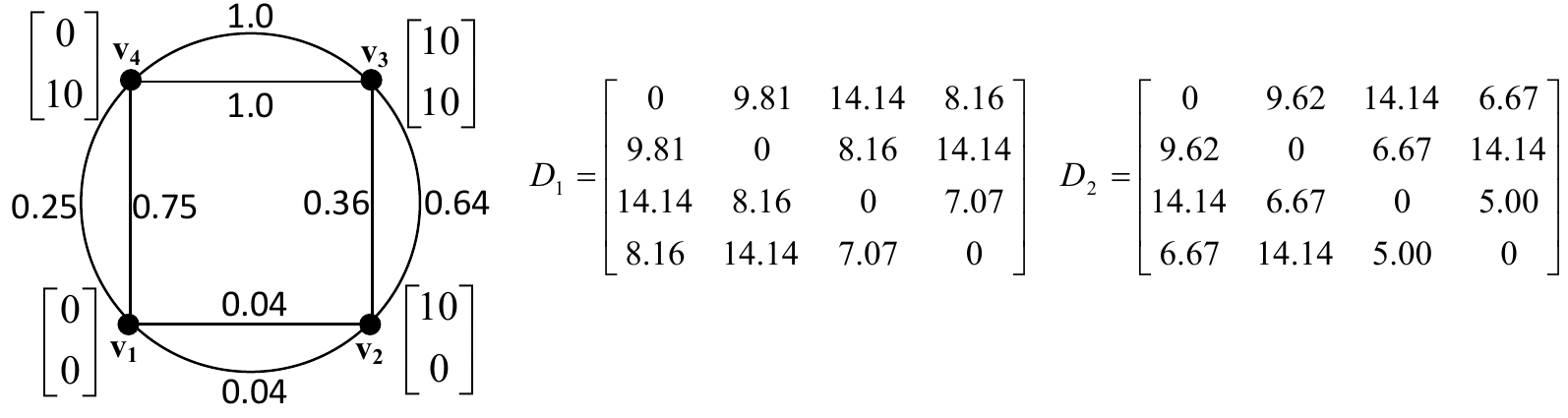}
\caption{A multi-attributed graph with both vertices and edges as multi-dimensional vectors, and distance matrices $D_1$, $D_2$ calculated using \emph{MAGDist} algorithm for $\gamma=1$ and $\gamma=2$, respectively}
\label{fig2}
\end{figure*}
Figure \ref{fig2} presents a multi-attributed graph in which both vertex and edge labels are multi-dimensional vectors. For example, the edge vector corresponding to the edge connecting $v_2$ and $v_3$ is $\vec{e}(2, 3) = (0.36, 0.64)^T$. If we simply calculate the Euclidean distance between the vertex-pairs, then the distance between the vertex-pairs $(v_1, v_2), (v_2, v_3), (v_3, v_4)$ and $(v_4, v_1)$ is same (10 unit), whereas the distance calculated by the proposed \emph{MAGDist} differs, based on the weight of the edges between them. 
The distance values between each vertex-pairs of the multi-attributed graph calculated using \emph{MAGDist} for $\gamma=1$ and $\gamma=2$ are is shown in $D_1$ and $D_2$ matrices, respectively of figure \ref{fig2}. It can be observed from these two distance matrices too that giving more weightage to edges by increasing the value of $\gamma$ reduces the distance between the respective vertex-pairs.
\section{Experimental Setup and Results}
To establish the efficacy of the proposed MAGDist distance measure, we have considered the well-known Iris data set \footnote{\url{http://archive.ics.uci.edu/ml/datasets/Iris}}, which contains total 150 instances of three different categories of Iris flower; 50 instances of each category. The Iris data set can be represented as a $150\times5$ data matrix, wherein each row represents an Iris flower, first four columns represent four different attribute values in centimetre, and the $5^{th}$ column represents class labels (categories) of the Iris flower as \textit{Setosa}, \textit{Virginica}, or \textit{Versicolour}.  Table \ref{Table1} shows a partial snapshot of the Iris data set.
\begin {table}
\caption{A partial view of the Iris data set}
 \label{Table1}
\begin{center}
\scalebox{0.83}{
 \begin{tabular}{|c|c|c|c|c|} 
 \hline
 \textbf{Sepal length} & \textbf{Sepal width} & \textbf{Petal length} & \textbf{Petal width} & \textbf{Species}\\ \hline
 5.1 & 3.5 & 1.4 & 0.2 & Setosa\\ \hline
4.9 & 3.0 & 1.4 & 0.2 & Setosa\\ \hline
4.7 & 3.2 & 1.3 & 0.2 & Setosa\\ \hline
4.6 & 3.1 & 1.5 & 0.2 & Setosa\\ \hline
5.0 & 3.6 & 1.4 & 0.2 & Setosa\\ \hline
... & ... & ... & ... & ...\\ \hline
6.4 & 3.2 & 4.5 & 1.5 &  versicolor\\ \hline
6.9 & 3.1 & 4.9 & 1.5 & versicolor\\ \hline
5.5 & 2.3 & 4.0 & 1.3 & versicolor\\ \hline
6.5 & 2.8 & 4.6 & 1.5 & versicolor\\ \hline
5.7 & 2.8 & 4.5 & 1.3 & versicolor\\ \hline
... & ... & ... & ... & ...\\ \hline
6.3 & 3.3 & 6.0 & 2.5 & virginica\\ \hline
5.8 & 2.7 & 5.1 & 1.9 & virginica\\ \hline
7.1 & 3.0 & 5.9 & 2.1 & virginica\\ \hline
6.3 & 2.9 & 5.6 & 1.8 & virginica\\ \hline
6.5 & 3.0 & 5.8 & 2.2 & virginica\\ \hline
\end{tabular}}
\end{center}
\end{table}
We model Iris data set as a multi-attributed similarity graph in which each vertex $v \in \Re^4$ is a 4-dimensional vector representing a particular instance of the Iris flower. The edge (similarity) between a vertex-pair $(u, v)$ is determined using equation \ref{eqn31} on the basis of the Gaussian kernel value defined in equation \ref{eqn13}, where $\sigma=1$ is a constant value \cite{Zaki2014:DMA}. 
\begin{equation}\label{eqn31}
\mathit{e(u, v)=
\begin{cases}
\kappa^G(u, v) & \text{if }\kappa^G(u, v) \geq 0.55\\
0 & \text{otherwise}
\end{cases}
}
\end{equation}
\begin{equation}\label{eqn13}
     \kappa^G(u, v) = e^{- \frac{ \lVert u - v \rVert^2}{2\sigma^2}}
\end{equation}
The resulting multi-attributed Iris similarity graph is shown in figure \ref{fig3} (termed hereafter as $G_1$ for rest of the paper), which contains 150 vertices and 2957 edges. In this graph, the instances of \textit{Setosa}, \textit{Virginica}, or \textit{Versicolour} are shown using \textit{triangles}, \textit{squares}, and \textit{circles}, respectively. Although $\kappa^G(u, u)=1.0$ for all vertices, we haven't shown self-loops in this graph. The proposed \emph{MAGDist} algorithm is applied over $G_1$ to calculate distance between all vertex-pairs, and finally \emph{MAGSim} algorithm is applied to generate Iris similarity graph (termed hereafter as $G_2$ for rest of the paper), which is shown in figure \ref{fig44}. In \cite{Zaki2014:DMA}, the authors have also used Gaussian kernel followed by the concept of nearest-neighbours to generate Iris similarity graph (termed hereafter as $G_3$ for rest of the paper) and applied Markov Clustering (MCL) to classify the instances of the Iris data into three different categories. Therefore, in order to verify the significance of our \emph{MAGDist} and \emph{MAGSim} algorithms, we have also applied the MCL over the Iris similarity graphs $G_1$ and $G_2$, and present a comparative analysis of all clustering results. Figures \ref{fig4} and \ref{fig5} present the clustering results after applying MCL over the Iris similarity graphs $G_1$ and $G_2$, respectively. Table \ref{Table2} presents the contingency table for the discovered clusters versus true Iris types from all three different Iris similarity graphs. It can be observed from this table that, in case of $G_2$, only six instances of iris-versicolor are wrongly grouped with iris-virginica in $C_2$ and three instances of iris-virginica are wrongly grouped with iris-versicolor in $C_3$; whereas in $G_1$ forty iris-versicolor instances are wrongly grouped with iris-virginica in $C_2$, and in $G_3$, one instance of iris-versicolor is wrongly grouped with iris-setosa in $C_1$ and fourteen instances of iris-virginica are wrongly grouped with iris-versicolor in $C_3$. 
\begin{figure}
\includegraphics[width=6cm]{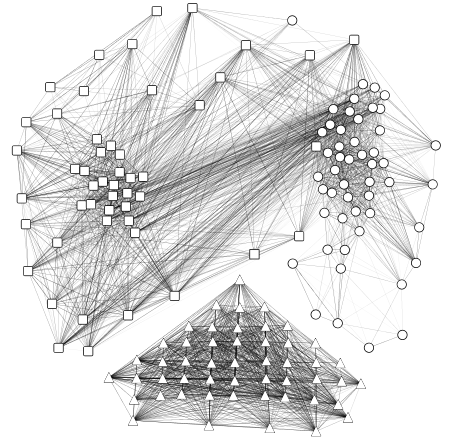}
\caption{\textbf{$G_1$:} Iris data graph modelled as a multi-attributed graph in which vertices are 4-dimensional real vectors and edges are the Gaussian similarity ($\geq 0.55$) between the respective vertices}
\label{fig3}
\end{figure}
\begin{figure}[h]
\includegraphics[width=6cm]{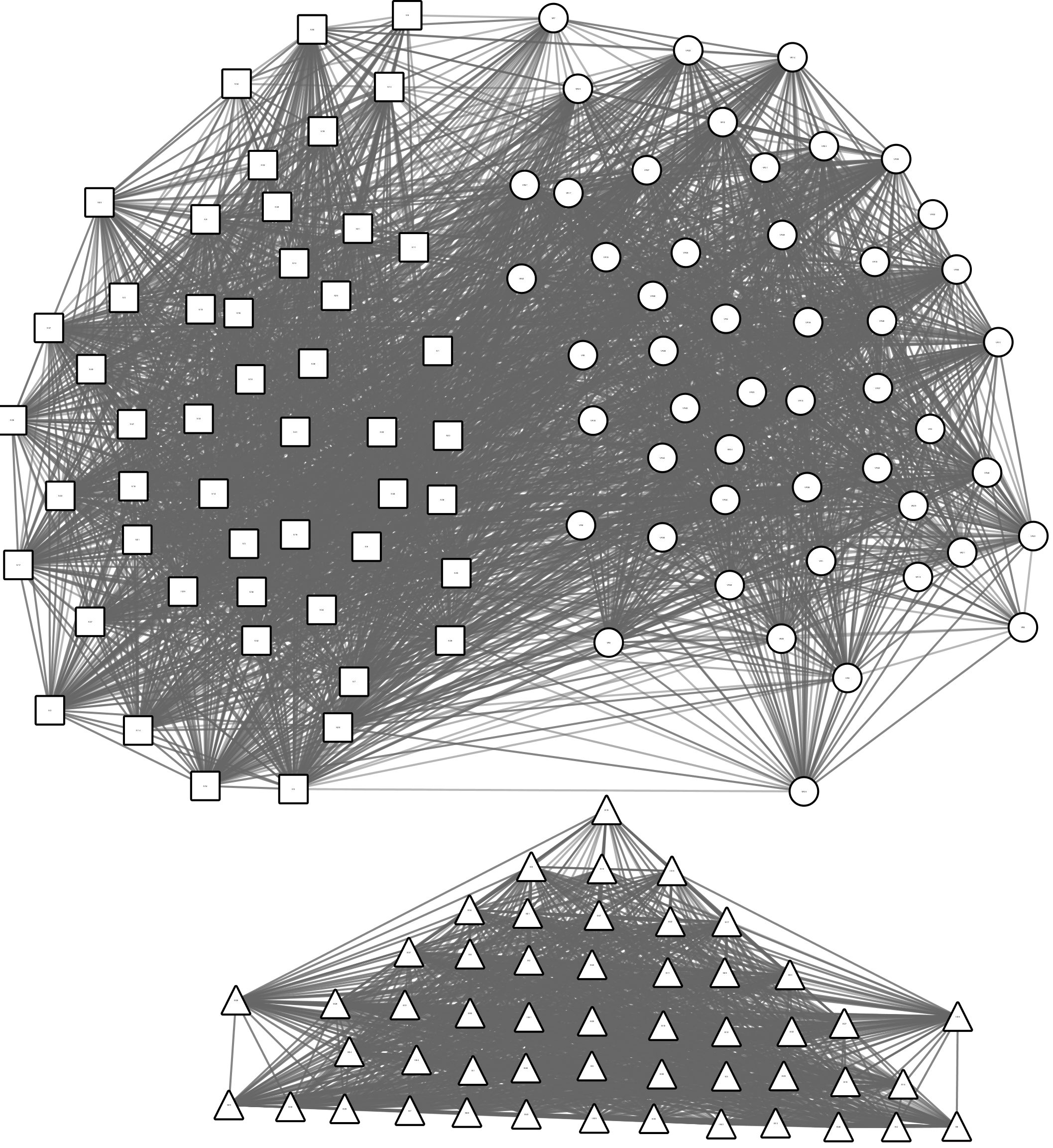}
\caption{\textbf{$G_2$:} Iris data graph modelled as a multi-attributed graph in which vertices are 4-dimensional real vectors and edges are \emph{MAGSim} similarity ($\geq 0.80$) calculated using equation \ref{eqn14}}
\label{fig44}
\end{figure}
\begin{figure}[h]
\includegraphics[width=6cm]{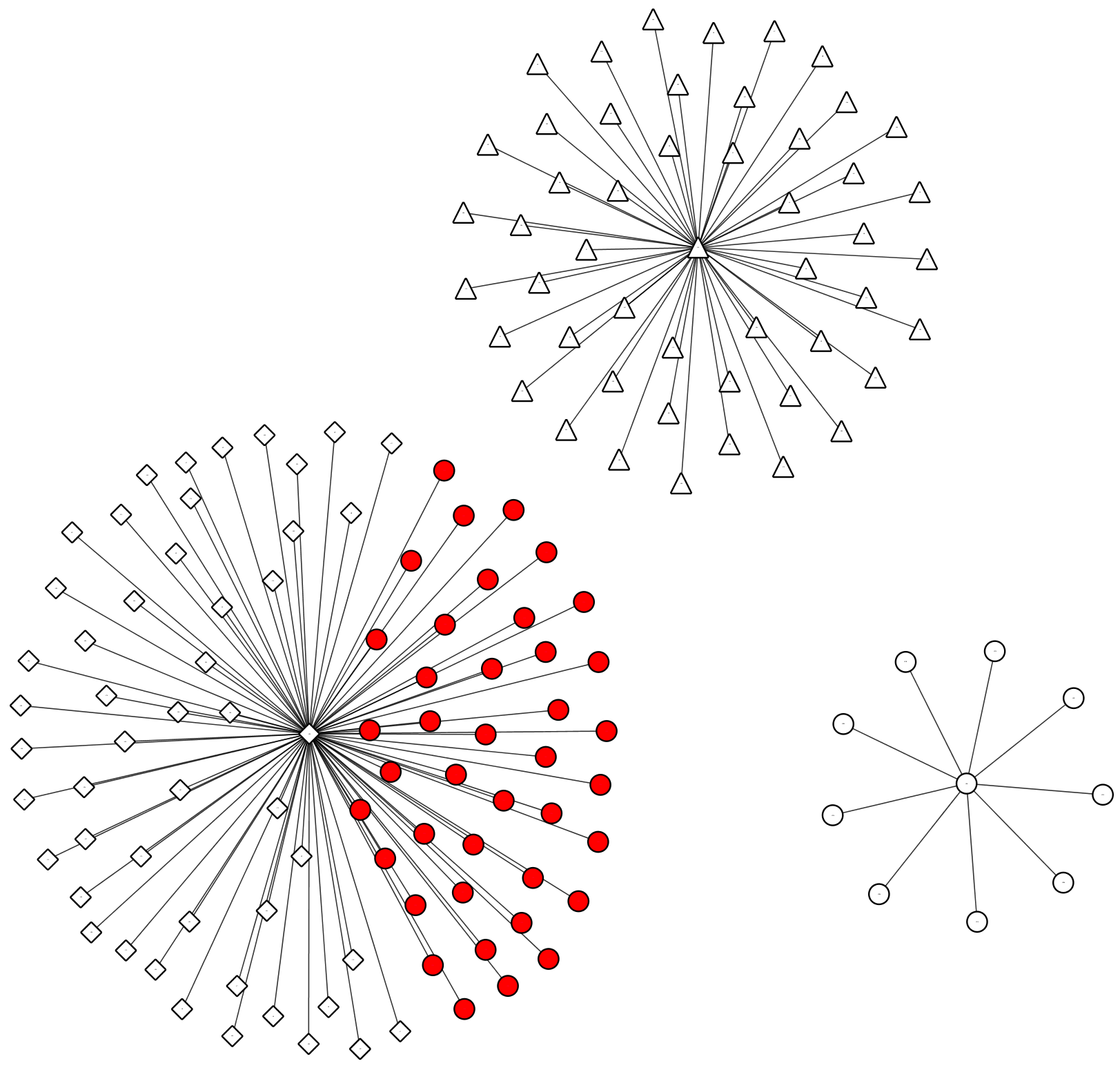}
\caption{Clustering results after applying MCL over the Iris data graph $G_1$}
\label{fig4}
\end{figure}
\begin{figure}[h]
\includegraphics[width=8cm]{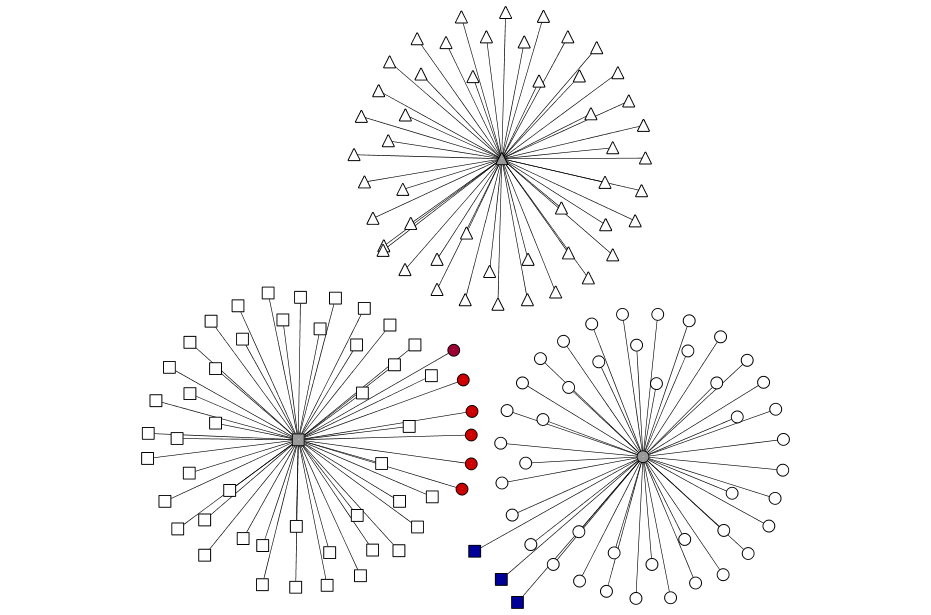}
\caption{Clustering results after applying MCL over the Iris data graph $G_2$}
\label{fig5}
\end{figure}

We have also analyzed the significance of different similarity graph generation methods in terms of \textit{True Positive Rate} (\emph{TPR}) and \textit{False Positive Rates} (\emph{FPR}) that are defined using equations \ref{eqn15} and \ref{eqn16}, respectively. In these equations, \emph{TP} is the \textit{True Positives}, representing the number of correctly classified positive instances, \emph{FP} is the \textit{False Positives}, representing the number of wrongly classified negative instances, and \emph{P} and \emph{N} represent the total number of positive and negative instances, respectively in the data set. Table \ref{Table3} presents \emph{TPR} and \emph{FPR} values for all three different similarity graphs, showing best results for $G_2$, which has been generated using our proposed \emph{MAGDist} and \emph{MAGSim} algorithms.
 
\begin {table}
\caption{Contingency table for MCL clusters from similarity graphs generated by three different methods}
\label{Table2}
\begin{center}
\scalebox{0.9}{
\begin{tabular}{|l|c c c|c c c|c c c|} \hline
 \multirow{2}{*}{\textbf{Clusters }} & \multicolumn{3}{c|}{\textbf{$G_1$ }} & \multicolumn{3}{c|}{\textbf{$G_2$}} & \multicolumn{3}{c|} {\textbf{$G_3$} \cite{Zaki2014:DMA}} \\ \cline{2-10}
& \textbf{Set} & \textbf{Vir} & \textbf{Ver} & \textbf{Set} & \textbf{Vir} & \textbf{Ver} & \textbf{Set} & \textbf{Vir} & \textbf{Ver} \\ \hline
$C_1$ (triangle) & 50 & 0 & 0 & 50 & 0 & 0 & 50 & 0 & 1 \\ \hline
$C_2$ (square) & 0 & 50 & 40 & 0 & 47 & 6 & 0 & 36 & 0 \\ \hline
$C_3$ (circle) & 0 & 0 & 10 & 0 & 3 & 44 & 0 & 14 & 49 \\ \hline
\end{tabular}}
\end{center}
\end{table}

\begin {table}[htbp]
\caption{Performance comparison of three different methods on Iris data set}
\label{Table3}
\begin{center}
\scalebox{0.9}{
\begin{tabular}{|l|c c|c c|c c|} \hline
 \multirow{2}{*}{\textbf{Clusters }} & \multicolumn{2}{c|}{\textbf{$G_1$}} & \multicolumn{2}{c|}{\textbf{$G_2$}} & \multicolumn{2}{c|} {\textbf{$G_3$}\cite{Zaki2014:DMA}} \\ \cline{2-7}
& \textbf{TPR} &  \textbf{FPR} & \textbf{TPR} & \textbf{FPR} & \textbf{TPR} & \textbf{FPR} \\ \hline
$C_1$ (triangle) & 1.00 & 0.00 & 1.00 & 0.00 & 1.00 & 0.01 \\ \hline
$C_2$ (square) & 1.00 & 0.40 & 0.94 & 0.06 & 0.72 & 0.00  \\ \hline
$C_3$ (circle) & 0.20 & 0.00 & 0.88 & 0.03 & 0.98 & 0.14 \\ \hline
\textbf{Average} & \textbf{0.73} &\textbf{ 0.13} & \textbf{0.94} & \textbf{0.03} & \textbf{0.90} & \textbf{0.05} \\ \hline
\end{tabular}}
\end{center}
\end{table}

\begin{equation}\label{eqn15}
     \text{TPR} = \frac{TP}{P}
\end{equation}
\begin{equation}\label{eqn16}
     \text{FPR} =  \frac{FP}{N}
\end{equation}

\subsection{Evaluation Results on Twitter Data Set}
In order to illustrate the application of the proposed distance measure over a real multi-attributed graph, we have considered a Twitter data set of 300 tweets, comprising equal number of tweets related to three different events -- \textit{NoteBandi (NTB)}, \textit{RyanInternationalSchool (RIS)}, and \textit{Rohingya (ROH)}. The tweets are modelled as a multi-attributed graph, wherein each vertex represents a tweet as an 110-dimensional binary vector based on the top-110 key-terms identified using \textit{tf-idf}, and two different edges exist between a vertex-pair -- one representing the degree of \textit{Hashtags} overlap and the other representing the \textit{tweet-time} overlap. The similarity graph is generated using \emph{MAGSim} algorithm, and shown in figure \ref{figTweetSim} wherein the instances of \textit{NoteBandi}, \textit{RyanInternationalSchool}, and \textit{Rohingya} are represented using \textit{squares}, \textit{triangles}, and \textit{circles}, respectively. Finally, \emph{MCL} is applied over the similarity graph to group the tweets into different clusters shown in figure \ref{figTweetMCL}. The evaluation of the obtained clusters is given in table \ref{Table4}. It can be seen from this table that only five instances of \textit{RyanInternationalSchool} are wrongly clustered with \textit{Rohingya} in $C_3$.

\begin{figure}[h]
\includegraphics[width=6cm]{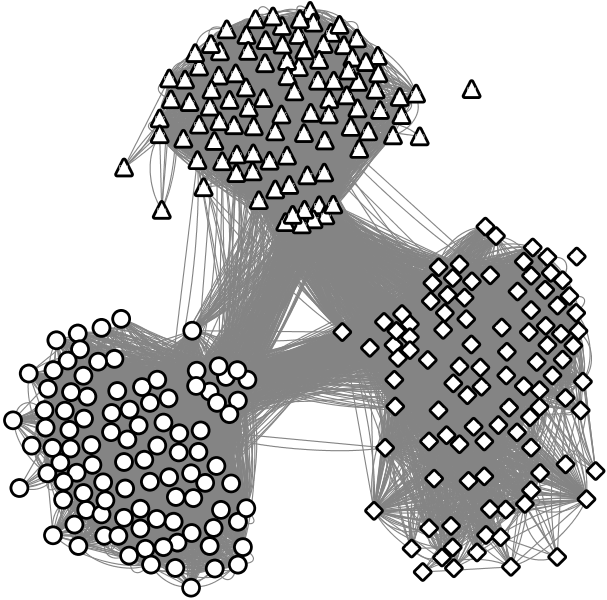}
\caption{Similarity graph generated from Twitter data set (\textit{only edges having similarity value > 0.5 are shown})}
\label{figTweetSim}
\end{figure}
\begin{figure}[h]
\includegraphics[width=8cm]{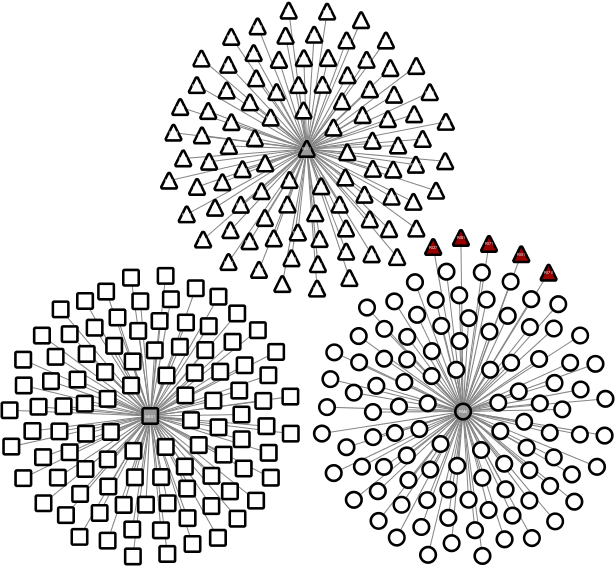}
\caption{Clustering results obtained by applying MCL over the similarity graph of the Twitter data set}
\label{figTweetMCL}
\end{figure}

\begin {table}
\caption{Evaluation results of the proposed method on Twitter data set}
\label{Table4}
\begin{center}
\scalebox{0.9}{
\begin{tabular}{|l|c c c|c c|} \hline
 \multirow{2}{*}{\textbf{Clusters}} & \multicolumn{3}{c|}{\textbf{Contingency Table}} & \multicolumn{2}{c|}{\textbf{Evaluation Results}}  \\ \cline{2-6}
& \textbf{NTB} &  \textbf{RIS} & \textbf{ROH} & \textbf{TPR} & \textbf{FPR} \\ \hline
$C_1$ (triangle) & 0 & 95 & 0 & 1.00 & 0.00  \\ \hline
$C_2$ (square) & 100 & 0 & 0 & 1.00 & 0.00  \\ \hline
$C_3$ (circle) & 0 & 5 & 100 & 1.00 & 0.025 \\ \hline
\multicolumn{4}{|r}\textbf{Average} & \textbf{1.00} & \textbf{0.008}  \\ \hline
\end{tabular}}
\end{center}
\end{table}

\section{Conclusion and Future Works}\label{conclusion}
In this paper, we have proposed a novel weighted distance measure based on weighted Euclidean norm that can be used to calculate the distance between vertex-pairs of a multi-attributed graph containing multi-labelled vertices and multiple edges between a single vertex-pair. The proposed distance measure considers both vertex and edge weight-vectors, and it is flexible enough to assign different weightage to different edges and scale the overall edge-weight while computing the weighted distance between a vertex-pair. We have also proposed a \emph{MAGDist} algorithm that reads the lists of vertex and edge vectors as two separate CSV files and calculates the distance between each vertex-pairs using the proposed weighted distance measure. Finally, we have proposed a multi-attributed similarity graph generation algorithm, \emph{MAGSim}, which reads the output produced by the \emph{MAGDist} algorithm and generates a similarity graph, which can be used by the existing classification and clustering algorithms for various analysis tasks. Since the proposed \emph{MAGDist} algorithm reads multi-attributed graph as CSV files containing vertex and edge vectors, it can be scaled to handle large complex graphs (aka big graphs). Applying proposed distance measure and algorithms on (research articles) citation networks and online social networks to analyze them at different levels of granularity is one of the future directions of work. 

%
%
\bibliographystyle{ACM-Reference-Format}
\bibliography{p42-abulaish}
\end{document}